\newtheorem{theorem}{Theorem}
\newtheorem{lemma}{Lemma}
\newtheorem{assrcm*}{\textbf{(RCM)}}
\newtheorem*{assp*}{\textbf{(P) Log-H\"older continuity condition}}
\newtheorem*{assi*}{\textbf{(I) Short-range interaction}}
\newtheorem*{dsknN*}{(\textbf{DS}.$k,n,N$)}
\numberwithin{equation}{section}
\numberwithin{theorem}{section}
\numberwithin{definition}{section}
\numberwithin{lemma}{section}
\DeclareMathOperator{\supp}{supp}
\DeclareMathOperator{\prob}{\mathbb{P}}
\newcommand{\condrcm}{\mathbf{(RCM)}}
\newcommand{\ee}{\mathrm{e}}
\newcommand{\esssup}{\mathrm{ess\ sup}}
\newcommand{\Bone}{\mathbf{1}}
\newcommand{\DZ}{\mathbb{Z}}
\newcommand{\DR}{\mathbb{R}}
\newcommand{\esm}{\mathbb{E}}
\newcommand{\DP}{\mathbb{P}}
\newcommand{\BK}{\mathbf{K}}
\newcommand{\FF}{\mathfrak{F}}
\newcommand{\CX}{\mathcal{X}}
\newcommand{\CK}{\mathcal{K}}
\newcommand{\CF}{\mathcal{F}}
\newcommand{\CN}{\mathcal{N}}
\newcommand{\CL}{\mathcal{L}}
\title[Localization for continuous Anderson models]{Conditional distribution  of the sample mean and localization}
\author{Tr\'esor Ekanga}
\address{Universit\'e Paris Diderot 13 Rue Albert Einstein 75013 Paris France}
\email{ekanga@math.cnrs.fr}
\keywords{sample mean, localization, Gaussian distribution, Uniform distribution}
\begin{document}

\begin{abstract}
We prove a new hypothesis on the conditional  distribution of the sample mean  of the fluctuations of an i.i.d. random potential in the Anderson model. The paper extends to uniform probability distribution some earlier work with Gaussian distribution  and the  localization results.
\end{abstract}    

\maketitle
\section{ Introduction}
The paper is devoted to some probability estimates on the fluctuations of the sample  mean  which are used in the proofs of eigenvalue concentration  bounds. In some earlier works \cites{AW09,Chu10a, Chu10b, Weg81} the authors analyzed regular distribution  such as the Gaussian distribution  in order to bound  in probability the resonances effects in the multi-particle multi-scale analysis. Recall that the multi-scale analysis is a recursive method which is used in the framework of the proofs on the spectral and exponential Anderson localization \cites{AW09, AGK09,BK05,CS09,Eka11,Eka19,Eka20,GK13}
and is based on a bound of the concentration of the eigenvalues in finite cubes.

In the present work, we treat the case of uniform probability distribution  of the fluctuations of the sample mean. To do so, we will consider the reduction  to the local analysis in the sample space and prove some more probability bounds  on the conditional distribution  function. The probability estimates on the marginal uniform distribution is then used to obtain some localization results such the pure point spectrum and the strong dynamical localization near the bottom of the spectrum. This complement some previous result in this subject of earlier works in the strong disorder regime.

\section{Gaussian i.i.d. samples}
Given a sample of $N$ i.i.d. random variables with  Gaussian  distribution $\CN(0,1)$, and introduce the sample mean $\xi=\xi_N$ and the fluctuations $\eta_i$ around the mean:

\[
\xi_N=\frac{1}{N} \sum_{i=1}^N \xi_i, \quad \eta_i=\xi-\xi_N\quad i=1,\ldots,N.
\]
Recall that $\xi_N$ is independent from the sigma-algebra  $\FF_{\eta}$ generated by $\{\eta_1,\dots,\eta_N\}$ which are linearly and have rank $N-1$. It follows from the fact that $\eta_i$ are all orthogonal to $\xi_N$ with respect to the standard scalar product in the linear space formed by $X_1,\ldots, X_N$ and given by 
\[
\langle Y,Z\rangle:=\esm\left[YZ\right]
\]

where $Y$ and $Z$ are real linear combinations  of $X_1,\ldots X_N$ (recall: $\esm[X_i]=0$). Thus  the conditional probability distribution  of $\xi_N$ given $\FF_N$ coincides with  the unconditional one, so $\xi_N$ follows the law $\CN(0,N-1)$ therefore $\xi_N$ has a bounded density
\[
P_{\xi}(t)=\frac{\ee^{-1/2 t^2}}{\sqrt{2\pi N-1}}\leq \frac{N^{1/2}}{\sqrt{2\pi}}
\]

Further, for any interval $I\subset\DR$ of length $|I|$, we have 
\begin{equation}\label{eq:xi.normal}
\esssup \prob\left\{\xi_N(\omega)\in |\FF\right\}=\prob\{\xi_N(\omega)\in I\}\\
\leq \frac{N^{1/2}}{\sqrt{2\pi}}|I|
\end{equation}
In this particular case of Gaussian samples the conditional regularity of the sample mean $\xi_N$ given the fluctuations $\CF$ is obtain as shows the following elementary example where the common probability distribution of the sample $X_1, X_2$ is just excellent $X_i$ following the law Unif$([0,1])$. So $X_i$ admit a compactly supported probability density bounded by $1$. Indeed,  set

\[
\xi=\xi_2= \frac{X_1+X_2}{2}, \quad \eta=\eta_1=\frac{X_1-X_2}{2},
\]
The random vector $(X_1,X_2)$ is uniformly distributed in the unit square $[0,1]^2$ and the  condition  $\eta=c$  selects a straight line  in the  two dimensional plane  with coordinates $(X_1,X_2)$, parallel to the main diagonal $\{X_1=X_2\}$. The conditional  distribution  of $\xi$ given $\{\eta=c\}$ is the uniform distribution  on the segment
\[
J_c:=\{(X_1,X_2): x_1-x_2=2c, 0\leq x_1,x_2\leq 1\}
\]
of length vanishing at $2c=\pm 1$. For $|2c|=1$ the conditional distribution  of $\xi$ on $J_c$ is concentrated on a single point. We will discuss in  the next Section  the general  case $N\geq 2$ i.i.d. random variables uniformly distributed in $[0,1]$.

\section{The main applications}     

\subsection{The conditional empirical mean in eigenvectors correlators bounds}
Let $\Lambda\subset \DZ^d$ be a finite  subset with $|\Lambda|=N\geq 1$, and $H_{\Lambda}(\omega)$ be a random discrete Schr\"odinger operators acting in the Hilbert space  $\ell^2(\Lambda)$ with i.i.d. random potential $V:\Lambda\times \Omega\rightarrow \DR$, relative to a probability space  $(\Omega,\FF,\DP)$. Let write  the random field $V$ on $\Lambda$ as 
\[
V(x,\omega)=\xi_N(\omega)+\eta_x(\omega),
\]
we can also write $H(\omega)$ as 
\[
H(\omega)= \xi_N(\omega)\Bone + A(\omega),
\]
where the  operator $A(\omega)$ is $\FF_{\eta}$-measurable and so  are its eigenvalues $\tilde{\mu}_j(\omega)$, $j=1,\ldots,N$. Since $A(\omega)$ commutes with the scalar operator $\xi_N(\omega)\Bone$, the eigenvalues $\lambda_j(\omega)$ of $H(\omega)$ have the form
\begin{equation}\label{eq:lambda}
\lambda_j(\omega)=\xi_N(\omega)+\mu_j(\omega)
\end{equation}
The equation \eqref{eq:lambda} implies the following bound: for any interval $I=[t,t+s]$

\begin{align*}
\prob\left\{tr P_I(H(\omega))\geq 1\right\}& \leq \sum_{j=1}^N \prob\{ \lambda_j(\omega)\in I\}
\sum_{j=1}^N\prob\{\xi_N(\omega)+\mu_j(\omega)\in I\}\\
&=\sum_{j=1}^N \esm\left[\prob\{\xi_N(\omega)+\mu_j(\omega)\in I\mid\FF_{\eta}\}\right]\\
&=\sum_{j=1}^N \esm\left[\prob\{\xi_N(\omega)\in[-\mu_j(\omega)+t, -\mu_j(\omega)+t+s]|\FF\}\right]
\end{align*}
Therefore, we have 

\begin{align*}
\prob\left\{\xi_N(\omega)+\tilde{\mu}_j\in I\mid\FF_{\eta}\right\}&=\prob\left\{\xi_N(\omega)\in\left[\mu_j+t,\mu_j+t+s\right]\mid \FF_{\eta}\right \}\\
&=\prob\left\{\xi_N\in [\tilde{\mu}_j,\tilde{\mu}_j+s]\mid \FF\right\}
\end{align*}
where $\tilde{\mu}_j(\omega)=-\mu_j(\omega)+t$ are $\FF_{\eta}$-measurable. Let us now introduce  the conditional continuity modulus of $\xi_N$ given $\FF_{\eta}$:

\[
\nu_N(s):=\sup_{t\in\DR}\esssup \prob\{\xi_N\in[t,t+s]\mid \FF_{\eta}\}, \quad s\in(0,\infty).
\]
We  have that
\[
\prob\{\lambda_j\in I\mid \FF_{\eta}\}\leq \nu_N(s),
\]
Thus 
\begin{equation} \label{eq:trace.PI}
\prob\left\{ tr P_I(H(\omega))\geq 1\right\}\leq N\nu_N(s)=|\Lambda|\nu_N(s).
\end{equation}
We also consider the probabilities 
\begin{equation}\label{eq:s}
s\longrightarrow \prob\left\{\xi_N(\omega)\in \left[\tilde{\mu}(\omega),\tilde{\mu}(\omega)+s\right]\right\},
\end{equation}  
and 
\[
s\longrightarrow \prob\left(\xi_N(\omega)\in\left[\tilde{\mu}(\omega)+s\right]\mid\FF_{\eta}\right)
\]
where $\tilde{\mu}$ is $\FF_{\eta}$-measurable  

\subsection{The Gaussian case}
In the case where $X_i$ follows the law $\CN(0,1)$, using the bound \eqref{eq:xi.normal}, we have that
\[
\prob\{ tr P_I(H(\omega))\geq 1\}\leq N\cdot\frac{N^{1/2}}{\sqrt{2\pi}}|I|=\frac{|\Lambda|^{3/2}}{\sqrt{2\pi}}|I|
\]

\subsection{Reduction to the local analysis in the sample space}
Assume that the support $S\subset\DR$ of the common continuous marginal  probability measure $\DP_V$ of the i.i.d.  random variables $X_j$, $1\leq j\leq N$, is covered by a finite  or countable union  of intervals 
\[
S\subset \bigcup_{k\in\CK} J_k, \quad \CK\subset\DZ, \quad J_k=[a_k,b_k], \quad a_{k+1}\geq b_k
\]
Let $\BK=\CK^N$ and for each $k=(k_1,\ldots,k_N)\in\BK$ denote $J_k=\prod_{i=1}^N J_{k_i}$
Due to the continuity of the marginal measurable, $J_k$ are disjoint: for all $k\neq \ell$ $\DP_V(J_k\cap J_{\ell})=0$. Respectively the family of the parallelepipeds $\{J_k,k\in \CK\}$ forms a partition  $\CK$ of the sample space, which  we will often identify with the probability space $\Omega$. Furthermore, denote by $\FF_{\CK}$ the sub-sigma algebra  of $\FF$ generated  by the partition $\CK$. Now the quantities of the form \eqref{eq:s} can be assessed as follows

\begin{align*}
\prob\left\{\xi_N\in[\tilde{\mu},\tilde{\mu}+s]\right\}&=\esm\left[\prob\left\{\xi_N\in[\tilde{\mu},\tilde{\mu}+s]\mid \FF_{\CK}\right\}\right]\\
&\leq\sum_{k\in\CK}\prob\{J_k\}\prob\left\{\xi_N\in[\tilde{\mu},\tilde{\mu}+s]\mid J_k\right\}
\end{align*}
Denote by $\DP_k\{\cdot\}$ the conditional probability measure, given $\{X\in J_k\}$ $\esm_k[\cdot]$. The respective  expectation and  $p_k=\DP\{J_k\}$. We have that
\begin{align*}
\prob\left\{ \xi_N\in[\tilde{\mu},\tilde{\mu}+s]\right\}&\leq\sum_{k\in\CK} p_k\esm_k\left[\prob_k(\xi_N\in[\tilde{\mu},\tilde{\mu}+s]\mid \FF_{\eta})\right]\\
&\leq\sup_{k\in\CK} \esm_k\left[\prob_k\{\xi_N\in[\tilde{\mu},\tilde{\mu}]\mid\FF_{\eta}\}\right]\\
&\leq \sup_{k\in\CK} \esm_k\left[\prob_k\{\xi_N\in[\tilde{\mu},\tilde{mu}+s]\mid \FF_{\eta}\}\right]
\end{align*}
We also  give in the next Section  analog estimate in the case  of uniform marginal distribution on the i.i.d. variables $X_i$

\section{Uniform probability distribution}

Let be given a real number $\ell\in(0,\infty)$ and an integer $N\geq 2$. We consider a sample of $N$ i.i.d. random variables  with uniform distribution  Unif$([0,\ell])$  and introduce the sample mean $\xi=\xi_N$ and the fluctuations, $\eta_i$ around the mean 
\[
\xi_N=\frac{1}{N} \sum_{i=1}^N X_i, \quad \eta_i=X_i-\eta_N.
\]
We also need a rescale  empirical mean 
\begin{equation}\label{eq:xi.tilde}
\tilde{\xi}_N= N^{1/2} \xi_N
\end{equation}
for the purposes of the orthogonal transformations  
\[
(X_1,\ldots,X_n)\rightarrow (\tilde{\xi}_N,\tilde{\eta}_2,\ldots,\tilde{\eta}_N)
\]
Next $X_i=\eta_i +N^{-1/2}\tilde{\xi}_N, \quad i=1,\ldots,N$. Further consider the Euclidean space  $\DR^N$ of real linear combinations of the random variables $X_i$ with the scalar product $\langle X',X''\rangle=\esm[X'X'']$. Thus the variables $\eta_i:\DR^N\rightarrow\DR$ are invariant under the group of translations
\[
(X_1,\ldots,X_N)\rightarrow (X_1+t,\ldots+t), \quad t\in\DR
\]
and so are their differences
\[
\eta_i-\eta_j=X_i-X_j, \quad 1\leq i_leq j\leq N.
\]
Introduce  the variables
\[
Y_i=\eta_i-\eta_N, \quad 1\leq i\leq N-1,
\]
Then the space $\DR^N$ is fibered into a union  of affine lines of the form 
\begin{align*}
\tilde{\CX}(Y)&:=\{X\in\DR^N: \eta_i-\eta_N=Y_i, i\leq N-1\}\\
& :=\{ X\in\DR^N: X_i-X_N=Y_i, i\leq N-1\}
\end{align*} 
labeled by the elements $Y=(Y_1,\ldots,Y_{N-1})$ of $Y^{N-1}\cong \DR^{N-1}$. Set 
\[
\CX(Y)= \tilde{X}(Y)\cap C_1=\{X\in C_1: X_i-X_N=Y_i, i\leq N-1\}
\]
and consider each non-empty interval $\CX(Y)\subset\DR^N$ with the natural structure of a probability space from $\DR^N$   
\begin{itemize}
\item[(i)]  if $|\CX(Y)|=0$, then we introduce the trivial sigma-algebra and trivial counting measure.
\item[(ii)] If $|\CX(Y)|=r\in(0, \infty)$ then we use the inherited structure of an interval of a one-dimensional  affine line  and  the normalized measure with constant density $r^{-1}$ with respect to the inherited Lebesgue measure on $\CX(Y)$. 
\end{itemize}

Introduce  an orthogonal coordinate transformation in $\DR^N$, $X\rightarrow (\tilde{\xi}_N,\tilde{\eta}_1,\ldots,\tilde{\eta}_{N-1})$ such that 
\begin{equation}\label{eq:xi.tilde}
\tilde{\xi}_N=N^{-1/2}\sum_{i=1}^N X_i=N^{1/2}\xi_N
\end{equation}
We have from \eqref{eq:xi.tilde} that for any given $a\in\DR$, $s\in(0,\infty)$ and some $a'\in\DR$,
\begin{equation}\label{eq:xi.s}
\xi_N\in[a,a+s]\Longleftrightarrow \tilde{\xi}_N\in[a', a'+N^{1/2}s].
\end{equation}
Denote $J^{(\ell)}=[0,\ell]^N$ and introduce the random variables 
\begin{equation}\label{eq:nu.N}
\nu_N(s, J^{(\ell)})=\nu_N(s; J^{(\ell)},X):=\esssup \sup_{t\in\DR}\prob\{\xi_N\in[t,t+s]\mid \FF_{\eta}\}
\end{equation}
since the $\{X_i\}$ are i.i.d. with uniform distribution on  $[0,\ell]$, the distribution  of the random vector   $X(\omega)$ is uniform in the  cubes  $J^{(\ell)}=[0,\ell]^{N}$, inducing a uniform conditional distribution  on each  element $\CX(Y)$. Thus by \eqref{eq:xi.s} and \eqref{eq:nu.N},
\begin{equation}\label{eq:nu.Jl}
\nu_N(s,J^{(\ell)})=\frac{N^{1/2}s}{\CX(Y)}
\end{equation}

\begin{lemma}\label{lem:prob.X}
Consider the i.i.d. random variables  $X_1,\ldots, X_N$ with $X_i$ following the uniform law Unif$(J_{\ell,i})$ with 
\[
J_{\ell,i}=[a_i,a_i+\ell]\subset \DR,\quad \ell\in(0,\infty)
\]
For any $\delta\in(0,\ell]$
\[
\prob\{|\CX(X)|\leq \delta\}\leq \sum_{i=1}^N\prob\{ X_i-a_i\leq \delta\}.
\]
\end{lemma}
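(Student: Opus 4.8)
The plan is to make the fibre $\CX(X)$ completely explicit and then read off the bound from a one-line union bound; there is essentially no probabilistic content, only a small amount of bookkeeping in the first step.

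First I would describe $\CX(X)$ concretely. It is the segment cut out of the box $\prod_{i=1}^N J_{\ell,i}$ by the affine line through $X$ in the diagonal direction $\Bone=(1,\dots,1)$, i.e.\ the direction along which the empirical mean $\xi_N$ moves while the fluctuations $\eta_i$ — equivalently the coordinates $Y_i=\eta_i-\eta_N=X_i-X_N$ — are held fixed. Putting $u_i:=X_i-a_i\in[0,\ell]$, the condition $X+t\Bone\in\prod_{i=1}^N[a_i,a_i+\ell]$ reads $-u_i\le t\le\ell-u_i$ for every $i$, so the admissible parameters fill the interval $\left[-\min_i u_i,\ \ell-\max_i u_i\right]$, whose length is $\ell-\left(\max_i u_i-\min_i u_i\right)$. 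Hence
\[
|\CX(X)|\;=\;c_N\left(\ell-\left(\max_i u_i-\min_i u_i\right)\right),
\]
with $c_N\ge1$ the fixed, $X$-independent factor converting this parameter length into the normalisation of $|\CX(X)|$ used above (namely $c_N=N^{1/2}$ if $|\CX(X)|$ denotes Euclidean arc-length, consistently with \eqref{eq:nu.Jl}).

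Next I would translate the small-fibre event. Since $\delta\in(0,\ell]$ and $c_N\ge1$, we have $\delta/c_N\le\ell$, so $|\CX(X)|\le\delta$ is the well-posed event $\max_i u_i-\min_i u_i\ge\ell-\delta/c_N$. The elementary observation $\max_i u_i-\min_i u_i\le\ell-\min_i u_i$ (valid because every $u_i\le\ell$) then shows that $|\CX(X)|\le\delta$ forces $\min_i u_i\le\delta/c_N\le\delta$, i.e.\ $X_j-a_j=u_j\le\delta$ for at least one index $j$. This gives the inclusion of events
\[
\left\{\,|\CX(X)|\le\delta\,\right\}\ \subseteq\ \bigcup_{i=1}^N\left\{\,X_i-a_i\le\delta\,\right\},
\]
and the union bound applied to it yields $\prob\{|\CX(X)|\le\delta\}\le\sum_{i=1}^N\prob\{X_i-a_i\le\delta\}$, which is the assertion. (The inclusion is a touch wasteful when $c_N>1$, but this is immaterial for an upper bound; also, note that independence of the $X_i$ is not used, only their marginal laws through $\prob\{X_i-a_i\le\delta\}=\delta/\ell$.)

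The only step that requires care — and the only place an error could enter — is the first one: getting the geometry of the intersection of the diagonal line with the box right, and pinning down the normalisation of $|\CX(X)|$ so that it is consistent with the way it appears elsewhere. After that, everything reduces to the inequality $\max-\min\le\ell-\min$ for numbers in $[0,\ell]$ and the subadditivity of $\prob$.
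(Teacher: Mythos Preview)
Your argument is correct and follows essentially the same route as the paper: both reduce to the inclusion $\{|\CX(X)|\le\delta\}\subset\bigcup_i\{X_i-a_i\le\delta\}$ via the observation that $|\CX(X)|\ge N^{1/2}\min_i(X_i-a_i)$, then apply a union bound. The only cosmetic difference is that you compute the exact fibre length $|\CX(X)|=N^{1/2}\bigl(\ell-\max_i u_i+\min_i u_i\bigr)$ (which the paper defers to the proof of Lemma~\ref{lem:prob.densities}), whereas the paper here uses only the one-sided lower bound coming from decreasing the coordinates.
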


\begin{proof}
We can consider the case  where $a_i=0$, $1\leq i\leq N$ with $X_i$ following the law Unif$([0,\ell])$. Otherwise, we make change  of variables $X_i\rightarrow X_i-a_i$. Let 
\[
\underline{X}=\underline{X}(X)=\min_i X_i.
\]
Observe that, each  $N^{1/2}X_i$, $i=1,\ldots,N$ restricted to $\CX(Y)$ provides a normalized length parameter on $\CX(Y)$. So the range of each $N^{1/2}X_i\mid_{\CX(Y)}$ is an interval  of length $|\CX(Y)|$. One can decrease e.g. the value of $X_1$ as long as all $\{X_i, 1\leq i\leq  N\}$ are strictly positive. Therefore, the maximum decrement of $X_i$ along $\CX(Y)$ is given by $\underset{X}(X)$ so the length of the normalized length parameter $N^{1/2}X_1$ along $\CX(Y(X))$ is an interval of length
\begin{equation}\label{eq:Y.N}
|\CX(Y(X))|\geq N^{1/2}\underline{X}(X)
\end{equation}  
Let
\[
\rm{A}_i(t):=\{X_i\leq t\},\quad \rm{A}(t)=\bigcup_{i=1}^N A_i(t)
\] 
\[
\rm{A}^c(t)=\Omega\setminus A(t)
\]
and note that, by \eqref{eq:Y.N}: 
\[
\min_{X\in \rm{A}^c(t)}|\CX(X)|\geq N^{1/2}\min_{X\in A^c(t)}\underline{X}(X)\geq N^{1/2}t,
\]
Equivalently, setting $u=N^{1/2}t$, so $t=N^{-1/2}u$, we have 
\[
|\CX(X)|\leq u\Longrightarrow X\in (N^{-1/2} u).
\]
With $u=\delta$ we deduce that
\[
\prob\{A(N^{1/2N^{-1/2}}\delta)\}=\prob\{A(\delta)\}\leq \sum_{i=1}^N\prob\{X_i\leq \delta\}
\]
\end{proof}

\begin{theorem}\label{thm:prob.nu}
Consider i.i.d. random variables  $X_1,\ldots,X_n$ with $X_i$ following the uniform law Unif$(J_{\ell,i})$ where $J_{\ell,i}=[a_i,a_i+\ell]\subset\DR$, $\ell\in(0,\infty)$. For any $\delta\in(0,\ell]$,
\[
\prob\{\nu_N(s, J^{(\ell)})\geq \delta^{-1}s\}\leq \frac{N\delta}{\ell}
\]
In particular, with $\delta=s^{\alpha}$
\[
\prob\left\{\nu_N(s,J^{(\ell)})\geq s^{1-\alpha}\right\}\leq N\ell^{-1}s^{\alpha}
\]
\end{theorem}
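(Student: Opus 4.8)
The plan is to combine the explicit formula \eqref{eq:nu.Jl} for the conditional continuity modulus with the distributional estimate of Lemma~\ref{lem:prob.X}; the argument is then short.

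First I would reduce the claim to a statement about the length of the fibre $\CX(X)$. As in the derivation of \eqref{eq:nu.Jl}, after the orthogonal coordinate change $X\mapsto(\tilde\xi_N,\tilde\eta_1,\dots,\tilde\eta_{N-1})$ and an application of \eqref{eq:xi.s}, on each non-empty fibre $\CX(X)$ the conditional law of $\xi_N$ given $\FF_{\eta}$ is the uniform distribution on an interval of length $N^{-1/2}|\CX(X)|$; hence, as an $\FF_{\eta}$-measurable random variable, $\nu_N(s,J^{(\ell)})\le N^{1/2}s\,|\CX(X)|^{-1}$. Since $|\CX(X)|$ depends only on the differences $X_i-X_j$, it is $\FF_{\eta}$-measurable, so this pointwise bound is legitimate and gives
\[
\{\nu_N(s,J^{(\ell)})\ge\delta^{-1}s\}\subseteq\{\,N^{1/2}s\,|\CX(X)|^{-1}\ge\delta^{-1}s\,\}=\{\,|\CX(X)|\le N^{1/2}\delta\,\}.
\]

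Next I would invoke Lemma~\ref{lem:prob.X}. Its proof establishes, through \eqref{eq:Y.N}, the pointwise implication $|\CX(X)|\le u\Rightarrow\min_i(X_i-a_i)\le N^{-1/2}u$ for every $u>0$; taking $u=N^{1/2}\delta$ yields $\{\,|\CX(X)|\le N^{1/2}\delta\,\}\subseteq\bigcup_{i=1}^N\{\,X_i-a_i\le\delta\,\}$. Because $\delta\in(0,\ell]$ and $X_i$ is uniform on $[a_i,a_i+\ell]$, one has $\prob\{X_i-a_i\le\delta\}=\delta/\ell$, so a union bound gives
\[
\prob\{\nu_N(s,J^{(\ell)})\ge\delta^{-1}s\}\le\sum_{i=1}^N\prob\{X_i-a_i\le\delta\}=\frac{N\delta}{\ell}.
\]
For the particular case I would set $\delta=s^{\alpha}$, which lies in $(0,\ell]$ provided $s^{\alpha}\le\ell$: then $\delta^{-1}s=s^{1-\alpha}$ and $N\delta/\ell=N\ell^{-1}s^{\alpha}$, which is the second assertion.

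The one step that needs care is the reduction in the second paragraph: one must justify that, after the orthogonal change of variables, the conditional distribution of $\xi_N$ given $\FF_{\eta}$ really is the uniform law on the one-dimensional segment $\CX(X)$ with the inherited normalization, so that the conditional modulus is controlled \emph{exactly} by the reciprocal fibre length, and one must keep track of the factor $N^{1/2}$ — feeding the threshold $N^{1/2}\delta$ rather than $\delta$ into the sharp form of Lemma~\ref{lem:prob.X} is precisely what produces the clean bound $N\delta/\ell$ instead of $N^{3/2}\delta/\ell$. Everything else is a union bound and the elementary computation of a uniform tail.
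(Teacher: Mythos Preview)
Your proposal is correct and follows essentially the same route as the paper: reduce $\{\nu_N(s,J^{(\ell)})\ge\delta^{-1}s\}$ via \eqref{eq:nu.Jl} to $\{|\CX(Y)|\le N^{1/2}\delta\}$, then invoke Lemma~\ref{lem:prob.X} together with $\prob\{X_i-a_i\le\delta\}=\delta/\ell$. Your explicit remark that one must use the sharp implication \eqref{eq:Y.N} from the \emph{proof} of Lemma~\ref{lem:prob.X} (so that the threshold $N^{1/2}\delta$ contracts back to $\delta$ and one lands on $N\delta/\ell$ rather than $N^{3/2}\delta/\ell$) is exactly the point the paper is using, though it leaves that bookkeeping implicit.
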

\begin{proof}
The random variable $X=(X_1,\ldots,X_N)\rightarrow |\CX(Y(X))|$ is $\FF_{\eta}$-measurable and takes constant values $|\CX(Y)|$ on each  element $\CX(Y)$. By \eqref{eq:nu.Jl}, for any $\delta\in(0,\infty)$,
\begin{align}
\prob\{\nu_N(s,J^{(\ell)})\geq \delta^{-1}s\}&\leq \prob\{\frac{N^{1/2}}{|\CX(Y)|}\geq \delta^{-1}s\}\notag\\
&\prob\{|\CX(Y)|\leq N^{1/2}s\}\label{eq:prob.nu.delta}
\end{align}  
Finally the result follows from \eqref{eq:prob.nu.delta} and Lemma \ref{lem:prob.X} since
$X_i$ follows the uniform law Unif$([0,\ell])$ 
\[
\prob\{X_i\leq \delta\}=\ell^{-1}\delta.
\]
\end{proof} 

\section{Some more probability distribution bounds}
We can improve  the bound of Theorem \ref{thm:prob.nu} for the main applications  to the multi-scale analysis:

\begin{lemma}\label{lem:prob.densities}
Assume that the i.i.d. random variables $X_1,\ldots,X_N$, $N\geq 2$ admit  the probability density $p_V$ with $\|P_V\|_{\infty}\leq \overline{\rho}$. Then 
\[
\prob\{|\CX(Y)|\leq r\} \leq \frac{1}{4} \overline{\rho}^2r^2N
\]
In particular, for $X_j$ following the uniform law  Unif$([0,\ell])$, we have that
\[
\prob\{|\CX(Y)|\leq r\}\leq \frac{r^2N}{4\ell^2}
\]
\end{lemma}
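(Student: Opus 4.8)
My plan is to combine the exact description of the fibres $\CX(Y)$ obtained in the proof of Lemma~\ref{lem:prob.X} with a \emph{two-sided} version of the estimate used there, and then to reduce everything to a statement about the two extreme values of the sample.

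\emph{Step 1: the length of a fibre.} Exactly as in the proof of Lemma~\ref{lem:prob.X}, a point $X=(X_1,\dots,X_N)$ sits on the affine line $\tilde{\CX}(Y(X))$ of direction $(1,\dots,1)$, and moving along it amounts to adding a common shift $u$ to all the coordinates; the shift keeps $X$ inside the relevant cube precisely for $u\in[-\underline{X},\,\ell-\overline{X}]$, where $\underline{X}=\min_i X_i$ and $\overline{X}=\max_i X_i$. Since $N^{1/2}X_1$ restricted to the fibre is a normalized length parameter, this yields the identity
\[
|\CX(Y(X))|=N^{1/2}\bigl(\underline{X}(X)+(\ell-\overline{X}(X))\bigr)=N^{1/2}\bigl(\ell-(\overline{X}-\underline{X})\bigr),
\]
which refines the one-sided inequality $|\CX(Y(X))|\geq N^{1/2}\underline{X}(X)$ used in Lemma~\ref{lem:prob.X}. (For a general bounded density the same computation applies, with $\ell$ the length of the support interval; if the support is unbounded the fibres have infinite length and the probability in question vanishes.)

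\emph{Step 2: reduction to the extremes.} From Step 1,
\[
\bigl\{\,|\CX(Y)|\leq r\,\bigr\}=\bigl\{\,\underline{X}+(\ell-\overline{X})\leq rN^{-1/2}\,\bigr\}.
\]
Set $c=rN^{-1/2}$. On this event let $i_{0},j_{0}$ be the (a.s.\ unique, and distinct, since the diagonal $\{X_1=\dots=X_N\}$ is null) indices with $X_{i_{0}}=\underline{X}$ and $X_{j_{0}}=\overline{X}$; then $X_{i_{0}}+(\ell-X_{j_{0}})\leq c$ forces both $X_{i_{0}}\leq c$ and $X_{j_{0}}\geq \ell-c$, whence
\[
\bigl\{\,|\CX(Y)|\leq r\,\bigr\}\subseteq\bigcup_{i\neq j}\bigl(\{X_i\leq c\}\cap\{X_j\geq \ell-c\}\bigr).
\]
For fixed $i\neq j$ the variables $X_i,X_j$ are independent, and $\|p_V\|_{\infty}\leq\overline{\rho}$ gives $\prob\{X_i\leq c\}\leq\overline{\rho}c$ and $\prob\{X_j\geq\ell-c\}\leq\overline{\rho}c$. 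Substituting $c=rN^{-1/2}$ and summing over pairs produces a bound of order $\overline{\rho}^{2}r^{2}N$; the ``in particular'' case is then immediate on taking $\overline{\rho}=\ell^{-1}$, for which $\prob\{X_i\leq c\}=c/\ell$ and $\prob\{X_j\geq\ell-c\}=c/\ell$ are exact.

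\emph{Where the difficulty lies.} The only nonformal point is the sharp numerical constant: the naive union bound over the $N(N-1)$ ordered pairs overcounts, so to pin it down I would instead condition on the single index realizing the maximum (a factor $N$), use independence of that coordinate from the remaining $N-1$ ones, and estimate $\prob\{X_{j_0}\geq\ell-c\}$ and $\prob\{\min_{i\neq j_0}X_i\leq c\}$ separately; equivalently, one integrates the joint density of the order statistics $(\underline{X},\overline{X})$, which is bounded by $N(N-1)\overline{\rho}^{2}$, over the triangle $\{\underline{X}+(\ell-\overline{X})\leq c\}$ of area $c^{2}/2$. Balancing these powers of $N$ against the $N^{-1}$ hidden in $c^{2}=r^{2}/N$ is exactly what is needed to fix the constant in the statement, and it is the one step that requires genuine care rather than bookkeeping.
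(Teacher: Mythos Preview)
Your approach is essentially identical to the paper's: both compute the exact fibre length
\[
|\CX(Y(X))|=N^{1/2}\bigl(\underline{X}(X)+(\ell-\overline{X}(X))\bigr),
\]
and then cover the event $\{\underline{X}+(\ell-\overline{X})\le t\}$ (with $t=rN^{-1/2}$) by a union of pairwise events $A_{ij}=\{X_i\text{ small}\}\cap\{\ell-X_j\text{ small}\}$, invoking independence of $X_i$ and $X_j$ for $i\neq j$.

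The only difference is precisely the point you flagged as ``where the difficulty lies''. The paper obtains the factor $1/4$ by asserting that
\[
\underline{X}+(\ell-\overline{X})\le t\ \Longrightarrow\ \max\{\underline{X},\,\ell-\overline{X}\}\le t/2,
\]
and then setting $A_{ij}(t)=\{X_i\le t/2\}\cap\{\ell-X_j\le t/2\}$, so that each pair contributes $(\overline{\rho}\,t/2)^2$; summing over $N(N-1)$ pairs and substituting $t=rN^{-1/2}$ gives exactly $\tfrac14\overline{\rho}^2r^2N$. But that implication is false (two nonnegative numbers summing to at most $t$ need not both be $\le t/2$; take $\underline{X}=t$, $\overline{X}=\ell$). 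Only the weaker conclusion you used, namely that each term is $\le t$, is valid, and that produces $(N-1)\overline{\rho}^2r^2\le N\overline{\rho}^2r^2$, i.e.\ the bound without the $1/4$.

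So your instinct was right: your naive union bound gives constant $1$; your order-statistics refinement (integrating the joint density of $(\underline{X},\overline{X})$, bounded by $N(N-1)\overline{\rho}^2$, over a triangle of area $t^2/2$) yields constant $1/2$; and the paper's route to $1/4$ rests on an incorrect step. The argument as written in the paper does not actually establish the constant in the statement, and your proof, with constant $1/2$, is the sound version.
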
 
\begin{proof}
Let $\underline{X}=\underline{X}(X)=\min_i X_i, \quad \overline{X}=\overline{X}(X)=\max_i X_i$ while  $\overline{X}(X)$ and $\underline{X}(X)$ vary along the elements $\CX(Y)$, their difference $\overline{X}(X)-\underline{X}(X)$ does not  and it is uniquely  determined by $\CX(Y)$. Each $N^{1/2}X_i$, $i=1,\ldots,N$ restricted  to $\CX(Y)$ provides  a  normalized length parameter on $\CX(Y)$, thus the range of each $N^{1/2}X_i\mid_{\CX(Y)}$ is an interval of length $|\CX(Y)|$
We can increase (resp. decrease) e.g. the value of $X_1$, as long as all $\{X_i, 1\leq i\leq N\}$ are  strictly smaller than $\ell$ (resp.  strictly positive).
Therefore the maximum increment of $X_1$ (Indeed, of any $X_i$) along $\CX(Y)$ is  given by $\ell-\overline{X}(X)$, and its maximum decrement equals $\underline{X}(X)$, so the range of the normalized length parameter $N^{1/2}X_1$ along $\CX(Y(X))$ is an interval of length $N^{1/2}(\ell-\overline{X}(X)+\underline{X}(X))$
\[
|\CX(Y(X))|=N^{1/2}(\ell-\overline{X}(X)+\underline{X}(X))
\]
Since both $\underline{X}(X)$ and $\ell-\overline{X}(X)$ are non-negative 
\[
\underline{X}+(\ell-\overline{X})\leq t \Longrightarrow \max\{\overline{X},\ell-\overline{X}\}\leq t/2
\]
with $0\leq t\leq \ell$, $(\ell-X_i\leq t/2)$ implies 
$(X_i\geq t/2)$, thus denoting  
\[
A_{ij}(t):= \{X_i\leq t/2\}\cap\{\ell-X_j\leq t/2\}
\]
we have that  for any $i$,
\[
A_{ii}(t)=\{X_i\leq t/2\}\cap\{\ell-X_i\leq t/2\}=\emptyset
\]
Therefore, 
\[
\{\max\{\underline{X}(X),\ell-\overline{X}(X)\}\leq t/2\}\cap \{\bigcup_{i\neq j}\{X_i\leq t/2, \ell-X_j\leq t/2\}
\]
Thus the union $\bigcup_{i\neq j}A_{ij}(t)$ contains  all samples $X$ with $|\CX(Y)|\leq t/2$. The sample $\{X_k\}$ is i.i.d. with $X_k$ following the law Unif$([0,\ell])$ so for any $i\neq j$.
\[
\prob\left\{A_{ij}(t)\right\}=\prob\{X_i\leq t/2\}\cdot\prob\{\ell-X_j\leq t/2\}=\frac{t^2}{4\ell^2}
\]
owing to Lemma \ref{lem:prob.X}
\begin{align*}
\prob\{|\CX(Y)|\leq r\}&=\prob\{N^{1/2}((\ell-\overline{X}(X))+\underline{X}(X))\leq r\}\\
&=\prob\{((\ell-\overline{X}(X))+\underline{X}(X))\leq r N^{-1/2}\}\\
&\leq \sum_{i\neq j}\prob\{A_{ij}(rN^{-1/2})\}\leq N(N-1)\frac{(\tilde{p}rN^{-1/2})^2}{4}\\
&\leq \frac{1}{4} \tilde{p}^2r^2N
\end{align*}
\end{proof}   
\begin{theorem}\label{thm:prob.nu.Jl}
Consider the i.i.d. random variables $X_1,\ldots,X_n$ with  each $X_i$ following the uniform law Unif$([0,\ell])$. For any $0\leq \delta\leq s\leq \ell$
\[
\prob\{\nu_N(s;J^{(\ell)})\geq \delta^{-1}s\}\leq  \frac{N^2\delta^2}{4\ell^2}
\]
In particular with $\delta	=s^{\alpha}$ $\alpha\in(0,1)$,
\[
\prob\{\nu_N(s,J^{(\ell)}) \geq s^{1-\alpha}\}\leq \frac{N^2s^{2\alpha}}{4\ell^2}
\]
\end{theorem}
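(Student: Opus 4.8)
The plan is to mirror the proof of Theorem \ref{thm:prob.nu}, but to feed in the quadratic bound of Lemma \ref{lem:prob.densities} in place of the linear bound of Lemma \ref{lem:prob.X}. First I record that the function $X=(X_1,\dots,X_N)\mapsto|\CX(Y(X))|$ is $\FF_{\eta}$-measurable, being constant and equal to $|\CX(Y)|$ on each fiber $\CX(Y)$; combined with the identity \eqref{eq:nu.Jl} this exhibits $\nu_N(s;J^{(\ell)})=N^{1/2}s/|\CX(Y)|$ as an explicit $\FF_{\eta}$-measurable random variable on the sample space.

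Next I rewrite the event in terms of the fiber length. For $\delta\in(0,\ell]$,
\[
\nu_N(s;J^{(\ell)})\ge\delta^{-1}s
\iff \frac{N^{1/2}s}{|\CX(Y)|}\ge\delta^{-1}s
\iff |\CX(Y)|\le N^{1/2}\delta,
\]
so that $\prob\{\nu_N(s;J^{(\ell)})\ge\delta^{-1}s\}=\prob\{|\CX(Y)|\le N^{1/2}\delta\}$. It therefore suffices to estimate the probability of a short fiber. Applying Lemma \ref{lem:prob.densities} with $r=N^{1/2}\delta$ and $\overline{\rho}=\ell^{-1}$ (the supremum of the density of $\mathrm{Unif}([0,\ell])$), and noting that the hypothesis $\delta\le s\le\ell$ keeps $r=N^{1/2}\delta$ in the range for which the lemma gives a bound (in its proof one sets $t=rN^{-1/2}=\delta\in[0,\ell]$), I obtain
\[
\prob\{|\CX(Y)|\le N^{1/2}\delta\}\le\frac14\,\overline{\rho}^{\,2}\,\bigl(N^{1/2}\delta\bigr)^2\,N=\frac{N^2\delta^2}{4\ell^2},
\]
which is the first assertion. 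The special case follows by taking $\delta=s^{\alpha}$, $\alpha\in(0,1)$, and substituting directly; the only thing to verify is that this choice remains admissible, i.e.\ that $s^{\alpha}$ still lies in $(0,\ell]$ on the range of scales of interest.

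As for difficulty: essentially all the geometric and probabilistic content is already packaged in Lemma \ref{lem:prob.densities} — the exact length formula $|\CX(Y(X))|=N^{1/2}(\ell-\overline{X}(X)+\underline{X}(X))$ for the one-dimensional fiber, together with the two-index union bound $\bigcup_{i\ne j}A_{ij}(t)$ which, thanks to independence and the vanishing $A_{ii}(t)=\emptyset$ of the diagonal terms, produces the crucial $r^2$ (rather than $r$) decay. Relative to that, the present theorem is just the substitution $r=N^{1/2}\delta$ plus bookkeeping of the constants and of the admissible range of $\delta$; I expect the only mild subtlety to be making the range conditions ($\delta\le\ell$, and $s^{\alpha}\le\ell$ in the specialization) explicit and consistent with the statement.
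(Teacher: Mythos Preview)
Your proposal is correct and follows essentially the same approach as the paper: reduce the event $\{\nu_N(s;J^{(\ell)})\ge\delta^{-1}s\}$ via \eqref{eq:nu.Jl} to $\{|\CX(Y)|\le N^{1/2}\delta\}$, then apply the quadratic bound of Lemma~\ref{lem:prob.densities} with $r=N^{1/2}\delta$. The only cosmetic difference is that the paper re-derives the content of Lemma~\ref{lem:prob.densities} inline (the $A_{ij}(t)$ argument) rather than citing it, whereas you invoke the lemma directly; your version is cleaner.
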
  
\begin{proof}
As before, we associate with each point $X\in\DR^N$ the straight line $X\in\CL(Y(X))$ parallel to the vector  $v=(1,\ldots,1)$ and consider their intersections  $\CX(Y(X))=\CL(Y(X))\cap J^{(\ell)}$. By \eqref{eq:nu.N} for any $\delta\in(0,\infty)$
\begin{align}
\prob\{\nu_N(s)\geq \delta\}&\leq \prob\{\frac{N^{1/2}s}{|\CX(Y)|}\geq \delta\}\label{eq:Nu.NS}\\
&\prob\{|\CX(Y)|\leq N^{1/2}s\delta^{-1}\}
\end{align}
Let
\[
\underline{X}=\underline{X}(X)=\min_{i} X_i,\quad \overline{X}(X)=\max_i X_i, \quad \overline{X}(X)=\max_i X_i,
\] 
while $\overline{X}(X)$ and $\underline{X}(X)$ vary along the elements $\CX(Y)$, their difference $\overline{X}(X)-\underline{X}(X)$ does not it. Therefore  
\begin{align*}
&\{\max\{\underline{X}(X), \ell-\overline{X}(X)\}\leq t/2\}\\
&\subset\bigcup_{i\neq j}\{X_i\leq t/2, \ell-X_j\leq t/2\}
\end{align*} 
Thus the union $\bigcup_{i\neq j} A_{ij}(t)$ contains all samples $X$ with $|\CX(Y)|\leq t/2$. The sample $\{X_k\}$ is i.i.d., with $X_k$ following the uniform law Unif$([0,\ell])$, so for any $i\neq j$
\begin{align*}
\prob\{A_{ij}(t)\}&=\prob\{X_i \leq t/2\}\cdot\prob\{\ell-X_j\leq t/2\}\\
&=\frac{t^2}{4\ell^2}
\end{align*} 
Owing to \eqref{eq:Y.N}  
\begin{align*}
\prob\{|\CX(Y)|\leq \}&=\prob\{ N^{1/2}((\ell-\overline{X}(X))+\underline{X}(X))\leq r\}\\
&=\prob\{((\ell-\overline{X}(X))+\underline{X}(X))\leq N^{-1/2}\}\\
&\leq \sum_{i\neq j}\prob\{\rm{A}_{ij}(rN^{-1/2})\}\leq N(N-1)\frac{(rN^{-1/2})^2}{4\ell^2}\\
&\leq \frac{r^2N}{4\ell^2}
\end{align*}
Setting $r=N^{1/2}\delta$, we infer from \eqref{eq:Nu.NS} 
\[
\prob\{\nu(s,\ell)\geq\delta\}\leq \frac{N^2s^2}{4\ell^2}
\]
proving the required results.
\end{proof}

Let $Q\subset\DZ^d$ be a parallelepiped and consider the sample of i.i.d. random variables  $\{V(y,\omega), y\in Q\}$ and consider  the sample mean $\xi_Q$ and the conditional continuity modulus $\nu_{|Q|}(s)$ given the sigma-algebra of fluctuations. We have  the following hypothesis:
\begin{assrcm*}
For some $C',C'',A',A'',B',B''\in(0,\infty)$,
\[
\prob\{\nu_{|Q|}(s)\geq C'|Q|^{A'}s^{B'}\}\leq C''|Q|^{A''}s^{B''},
\]
\end{assrcm*}  
Now for an i.i.d. sample with distribution Unif$([0,\ell])$, $\ell\in(0,\infty)$, Theorem \ref{thm:prob.nu.Jl} can be  reformulated as follows.

\begin{theorem}\label{thm:prob.Nu.Q}
Consider an i.i.d. random field $V:\DZ^d\times \Omega\rightarrow\DR$ with marginal distribution Unif$([c,c+\ell])$, $c\in\DR$.  Then $V$ satisfies the condition $\condrcm$ with the parameters  which can be chosen as follows
\[
C'=1,  \quad A'=0, \quad b'=1-\alpha
\]
\[
C''=\frac{1}{4\ell^2}, \quad A''=2, \quad b''=1-\alpha
\]
We can set for example
\[
b'=b''=2/3
\]
Explicitly
\[
\prob\{\nu_{|Q|}(s,\ell)\geq s^{1-\alpha}\}
\]
\end{theorem}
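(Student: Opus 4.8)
The plan is to read this off directly from Theorem~\ref{thm:prob.nu.Jl} after a preliminary translation. Fix a finite parallelepiped $Q\subset\DZ^d$ and put $N=|Q|$; identify the sample $\{V(y,\omega):y\in Q\}$ with an i.i.d. family $X_1,\dots,X_N$ of $\mathrm{Unif}([c,c+\ell])$ variables. First I would reduce to the case $c=0$ via the change of variables $X_i\mapsto X_i-c$, which turns the sample into an i.i.d. $\mathrm{Unif}([0,\ell])$ family. The key point, recorded in the previous section, is that the increments $\eta_i-\eta_j=X_i-X_j$ (and hence the fluctuations $\eta_i=X_i-\xi_N$ and the sigma-algebra $\FF_\eta$) are invariant under the global shift $(X_1,\dots,X_N)\mapsto(X_1+t,\dots,X_N+t)$, while the event $\{\xi_N\in[t,t+s]\}$ is merely translated. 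Hence $\nu_{|Q|}(s)$, together with its distribution, is unchanged by this reduction, so it suffices to treat the case of $\mathrm{Unif}([0,\ell])$.

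Next I would invoke Theorem~\ref{thm:prob.nu.Jl} with $N=|Q|$ and $\delta=s^{\alpha}$ for a fixed $\alpha\in(0,1)$, which gives, for $s$ in the admissible range,
\[
\prob\{\nu_{|Q|}(s)\ge s^{1-\alpha}\}\le\frac{|Q|^2 s^{2\alpha}}{4\ell^2}.
\]
Comparing with the template in $\condrcm$, the left-hand threshold $C'|Q|^{A'}s^{B'}$ is matched by $C'=1$, $A'=0$, $B'=1-\alpha$, and the right-hand bound $C''|Q|^{A''}s^{B''}$ by $C''=\tfrac{1}{4}\ell^{-2}$, $A''=2$, $B''=2\alpha$, which is exactly the asserted list of parameters. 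Specializing to $\alpha=1/3$ makes the two exponents on $s$ coincide, $B'=B''=2/3$, and produces the explicit bound $\prob\{\nu_{|Q|}(s)\ge s^{2/3}\}\le\tfrac{1}{4}\ell^{-2}|Q|^2 s^{2/3}$.

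I do not expect a genuine obstacle here: all the content is in Theorem~\ref{thm:prob.nu.Jl} (and, through it, in Lemma~\ref{lem:prob.densities}), and the present statement is merely a repackaging of that estimate into the form $\condrcm$. The only points that need a line of care are the invariance of $\FF_\eta$ under $X_i\mapsto X_i-c$, which is immediate from the translation-invariance of the differences $X_i-X_j$, and keeping track of the range of $s$ for which the bound is non-vacuous, i.e. $s$ small compared with $\ell/|Q|$, a restriction inherited verbatim from Theorem~\ref{thm:prob.nu.Jl}.
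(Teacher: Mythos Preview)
Your proposal is correct and follows exactly the route the paper intends: the theorem is stated there as a direct reformulation of Theorem~\ref{thm:prob.nu.Jl}, with no separate proof given, and you have simply filled in the translation $X_i\mapsto X_i-c$ and the parameter matching. Note that your value $B''=2\alpha$ is the one actually delivered by Theorem~\ref{thm:prob.nu.Jl}; the paper's displayed $b''=1-\alpha$ appears to be a typo, since only with $B''=2\alpha$ does the choice $\alpha=1/3$ produce the advertised example $b'=b''=2/3$.
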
 

\section{Smooth positive probability densities}
We are now ready to consider a richer class of probability distributions with uniform positivity and smoothness of the probability density on a compact interval

\begin{theorem}\label{thm:densities}
Assume that the common probability distribution of the i.i.d. random variables $V_j$, $j=1,\ldots,N$ $F_V$ satisfies the following conditions
\begin{enumerate}
\item[(i)]
the probability distribution is absolutely continuous 
\begin{equation}
dF_V(v)=\rho(v)dv, \quad \supp\rho=[0,\ell]\\
\end{equation}
\item[(ii)] there exist $\rho_*$, $\overline{\rho}\in(0,\infty)$ such that
\begin{equation} \label{eq:rho.t}
\forall\in[0,\ell], \quad \rho_*\leq \rho(t)\leq \overline{\rho},\\
\end{equation}
\item[(iii)] $\rho$ has bounded  derivative  on  $(0,\ell)$
\begin{equation}
\|\rho'(\cdot)\Bone_{(0,\ell)}\|_{\infty}\leq C'_{\rho}
\end{equation}
\end{enumerate}
Then there exists $c_*=c_*(F_V)\in(0,\infty)$ such that for any $\delta\in(0,c_*N^{-3/2}]$
\[
\prob\{\nu_N(s)\geq \delta^{-1}s\}\leq \frac{4\overline{\rho}^2N^2\delta^2}{\ell^2}
\]
in particular, with $\delta=s^{\alpha}\leq C_*^{1\alpha} N^{-3/(2\alpha)}$, $\alpha\in(0,1)$, we have that
\[
\prob\{\nu_N(s)\geq s^{1-\alpha}\}\leq \frac{4\overline{\rho}^2}{\ell^2} N^2 s^{2\alpha}
\]
consequently, the i.i.d. random fields satisfying (i)--(iii), is  of the form $\condrcm$.
\end{theorem}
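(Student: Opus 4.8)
The plan is to run the same reduction to the fibration of $J^{(\ell)}=[0,\ell]^N$ by the affine lines $\CX(Y)$ parallel to $(1,\dots,1)$ that was used in Theorems~\ref{thm:prob.nu} and~\ref{thm:prob.nu.Jl}, replacing the elementary length identity—special to the uniform law—by an estimate that is stable under a bounded, uniformly positive, Lipschitz density. Writing $\eta_i=X_i-\xi_N$ and passing to the orthogonal coordinates $(\tilde\xi_N,\tilde\eta_1,\dots,\tilde\eta_{N-1})$, whose Jacobian is $1$, the conditional law of $\xi_N$ given $\FF_\eta$ is, on the fiber through a sample point, absolutely continuous with density
\[
f(\xi)=\frac{G(\xi)}{\int_{A}^{A+R}G(\xi)\,d\xi},\qquad G(\xi):=\prod_{i=1}^{N}\rho(\xi+\eta_i),
\]
supported on $\{\xi:\xi+\eta_i\in[0,\ell]\text{ for all }i\}=[A,A+R]$ with $R=\ell-(\overline X-\underline X)=N^{-1/2}|\CX(Y)|$, where $\overline X:=\max_iX_i$ and $\underline X:=\min_iX_i$; here I use only that $\{\eta_i\}$ and $\FF_\eta$ carry the same information as the fiber. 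Hence, $\FF_\eta$-almost surely,
\[
\nu_N(s)=\sup_{t\in\DR}\frac{\int_t^{t+s}G(\xi)\,d\xi}{\int_A^{A+R}G(\xi)\,d\xi},
\]
and everything reduces to bounding this ratio and then estimating $\prob\{R\text{ small}\}$.

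The crucial step is to bound $\nu_N(s)$ \emph{with no exponential loss in $N$}: the crude inequalities $\int_t^{t+s}G\le\overline\rho^{\,N}s$ and $\int_A^{A+R}G\ge\rho_*^{\,N}R$ only give $\nu_N(s)\le(\overline\rho/\rho_*)^N\,s/R$, which is useless. Instead I would observe that on the interior of $[A,A+R]$ every argument $\xi+\eta_i$ lies in $(0,\ell)$, so by hypotheses (ii)--(iii)
\[
\bigl|(\log G)'(\xi)\bigr|=\Bigl|\sum_{i=1}^{N}\frac{\rho'(\xi+\eta_i)}{\rho(\xi+\eta_i)}\Bigr|\le\frac{NC'_\rho}{\rho_*}=:a,
\]
so $\log G$ is $a$-Lipschitz on $[A,A+R]$ and $G$ varies by a multiplicative factor at most $e^{a\Delta}$ over any interval of length $\Delta$. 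Let $\xi^\star$ maximise $G$ on $[A,A+R]$ and set $\Delta:=\min\{R,a^{-1}\}=\min\{R,\rho_*/(NC'_\rho)\}$. One finds an interval $I^\star\subset[A,A+R]$ of length $\Delta$ containing $\xi^\star$, and on it $G\ge e^{-a\Delta}G(\xi^\star)\ge e^{-1}G(\xi^\star)$; combined with $\int_t^{t+s}G\le s\,G(\xi^\star)$ this yields the deterministic estimate
\[
\nu_N(s)\le\frac{s\,G(\xi^\star)}{e^{-1}\Delta\,G(\xi^\star)}=e\,s\,\max\Bigl\{\frac1R,\ \frac{NC'_\rho}{\rho_*}\Bigr\}\qquad(\FF_\eta\text{-a.s.}),
\]
which is the general counterpart of $\nu_N(s,J^{(\ell)})=N^{1/2}s/|\CX(Y)|$ from the uniform case.

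To conclude, I would pick $c_*=c_*(F_V)$ so small—e.g. $c_*=\min\{\rho_*/(3C'_\rho),\ \ell/4\}$, admissible for $N\ge2$ since $N^{-3/2}\le N^{-1}$—that $\delta\le c_*N^{-3/2}$ forces both $e\delta<\rho_*/(NC'_\rho)$ and $eN^{1/2}\delta\le\ell$. Fix such a $\delta$ and distinguish two cases. If $R\ge\rho_*/(NC'_\rho)$ the estimate above gives $\nu_N(s)\le es\,NC'_\rho/\rho_*<\delta^{-1}s$, so this event does not occur; if $R<\rho_*/(NC'_\rho)$ it gives $\nu_N(s)\le es/R$, so $\nu_N(s)\ge\delta^{-1}s$ forces $R\le e\delta$. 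Consequently
\[
\{\nu_N(s)\ge\delta^{-1}s\}\subset\{R\le e\delta\}=\{|\CX(Y)|\le e\,N^{1/2}\delta\},
\]
and Lemma~\ref{lem:prob.densities}, applied with $\|\rho\|_\infty\le\overline\rho$ and $r=eN^{1/2}\delta\in(0,\ell]$, bounds the last probability by $\tfrac14\overline\rho^{\,2}(eN^{1/2}\delta)^2N=\tfrac{e^2}{4}\overline\rho^{\,2}N^2\delta^2$, i.e. by a constant multiple of $\overline\rho^{\,2}N^2\delta^2$ (the numerical constant is immaterial and is absorbed into the form $4\overline\rho^{\,2}N^2\delta^2/\ell^2$). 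The ``in particular'' line is the specialisation $\delta=s^\alpha$ for $s$ small enough that $s^\alpha\le c_*N^{-3/2}$, and reading off $C'=1$, $A'=0$, $B'=1-\alpha$, $A''=2$, $B''=2\alpha$ (with $\alpha=1/3$ giving $B'=B''=2/3$) shows that the field is of the form $\condrcm$.

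The only real obstacle is the exponential-in-$N$ blow-up of the crude bound on $\int_IG/\int_{[A,A+R]}G$; it is removed by working with $\log G$, trading the pointwise factor $(\overline\rho/\rho_*)^N$ for the additive quantity $a=NC'_\rho/\rho_*$ weighted against a window of width $\sim a^{-1}$. The only remaining fiddly point, fitting $I^\star$ inside $[A,A+R]$, is what produces the two cases and the scale $\delta\lesssim N^{-3/2}$; in fact $\delta\lesssim N^{-1}$ already suffices, so the stated restriction is comfortably met.
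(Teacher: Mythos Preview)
Your proof is correct and follows essentially the same route as the paper: bound $|(\log G)'|$ by a quantity of order $N$ (the paper does this in the normalized coordinate $\tilde\xi_N$, getting $C_1N^{1/2}$, which is the same bound after rescaling), compare $G$ to its maximum on a window of length $\sim a^{-1}$ to get $\nu_N(s)\lesssim N^{1/2}s/|\CX(Y)|$ whenever $|\CX(Y)|$ is not too small, deduce $\{\nu_N(s)\ge\delta^{-1}s\}\subset\{|\CX(Y)|\le cN^{1/2}\delta\}$, and finish with Lemma~\ref{lem:prob.densities}. Your remark that $\delta\lesssim N^{-1}$ already suffices is a genuine (if minor) sharpening of the paper's stated range $\delta\le c_*N^{-3/2}$, and your constant $e^2/4$ is in fact better than the paper's $4$; the $\ell^{-2}$ in the theorem statement is a typo inherited from the uniform case and does not appear in the paper's own proof either.
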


\begin{proof}[Step 1. Smoothness of the conditional measure]
By the smoothness assumption  (iii), the product probability measure with density
\[
p(x_1,\ldots,x_n)=\prod_{j=1}^n \rho(x_j)=\ee^{\sum_{j=1}^n \ln \rho(x_j)}
\]
induces on the interval $\CX(Y)\subset\CL(Y)$ a measure with smooth density with respect to  the Lebesgue measure on the line $\CL(Y)\subset\DR^N$. let $t=\tilde{\xi}_N$ be the normalized length parameter along $\CL(Y)$, then (cf. \eqref{eq:xi.tilde}).
\[
\CL(Y)=\{(\eta_1+t N^{-1/2},\ldots,\eta_N+ tN^{-1/2}), t\in\DR\}
\]
So the density at the point $t$ has the form
\[
p(t)=Z^{-1}(Y)\prob_{j=1}^np(\eta_j+t)=\ee^{\sum{j=1}^n\ln\rho(\eta_j+t)}
\]
where $Z^{-1}(Y)$ is the normalization factor. In particular,
\begin{equation}_label{eq:p.identity}
\frac{d}{dt} p(t)= N^{-1/2}p(t)\sum_{j=1}^N\frac{p'(\eta_j+tN^{-1/2})}{p(\eta_j+tN^{-1/2}}
\end{equation}

\textbf{Step 2 From $\nu$ to $|\CX(Y)|$} By \eqref{eq:p.identity} combined with assumption  \eqref{eq:rho.t}
\[
\|\frac{p'}{p}\mid_{\CX(Y)}\|_{\infty}\leq N\cdot N^{-1/2}C'_{\rho}\rho_*^{-1}\leq C_1 N^{1/2}
\]
In particular
\begin{equation}\label{eq:rho.prime}
\|p'\mid_{\CX(Y)}\|\leq C_1 N^{1/2}\|p\mid_{Y}\|_{\infty}
\end{equation}
For notational convenience, we identify $\CL(Y)$ with the real line $\DR$, equipped with the normalized coordinate $t=\tilde{\xi}_N$ and  let $t^*=t^*(Y)$ be any point of maximum of the density $\rho$ restricted to $\CX(Y)$ and  $\rho^*(Y)=\rho(t^*)$, the existence of $t^*(Y)$ follows from the continuity of $\rho$. Assume that
\[
|\CX(Y) |\geq 2\ell_N, \quad \ell_N\leq \ell_*N^{-1}
\]
where $\ell_*=\ell_*(F_V)\in(0,\infty)$ is small enough
\[
\ell_*(F_V)=(C_1(F_V))^{-1}
\]  
and depends upon the minimum of the density $p(\cdot)$ and the sup-norm of  its derivative, both of these quantities are determined by the probability distribution  function  $F_V$. Since  $|\CX(Y)|\geq 2\ell_N$ at least one of the interval.
$[t^*-\ell_N,t^*]$, $[t^*,t^*+\ell_N]$ (perhaps, both of them) is inside the interval $\CX(Y)$ and denote by $J_*$ such an intervals then for any $t\in\CX(Y)$, we have by \eqref{eq:rho.prime}
\begin{align*}
|p(t)-p(t^*)|&\leq \ell_* N^{-1}\cdot\max_{s\in J_*}\rho'(s)\\
(C_1\ell_*) N^{1/2}\cdot N^{-1}\cdot\rho^*(Y)
\end{align*}
So that $\forall t\in\CX(Y)$ and, e.g., $N\geq 4$,
\begin{align*}
\frac{1}{2}p^*(Y)&\leq p^*(Y)(1-N^{-1/2})\leq p(t)\\
&\leq p^*(Y)\cdot (1+N^{-1/2})\leq 2 p^*(Y)\\
\end{align*}  
the conditional measure induced, on $\CX(Y)$ has the form $dp_{Y}(t))=Z^{-1}(Y)p(t)dt$
with $Z(Y)=\int_{\CX(Y)} p(t)dt$, and we have 
\[
Z(Y)\geq \int_{J_*} p(t)dt=\frac{1}{2} p(t^*)\ell_N
\]
Therefore the assumption $|\CX(Y)|\geq 2\ell_N$, we have for any $t'\in\DR$:
\begin{align*}
\prob\{\xi_N\in[t',t'+s]\mid Y\}&=\prob\{\tilde{\xi}_N\in[t'',t''+N^{1/2}_s]\mid Y\}\\
&= Z^{-1}(Y)\int_{t''}^{t''+N^{1/2}s} p(t) dt\\
&\frac{p(t^*)N^{1/2}s}{p(t^*)\ell_N/2}=\frac{2N^{1/2}s}{\ell_N}
\end{align*}
(here $t''=N^{1/2}t'$), yielding for such  
\[
\nu_N(s\mid Y)\leq 2 N^{1/2}\ell^{-1}_Ns
\]
Therefore
\[
\{ \nu_N(s)\geq 2N^{1/2}\ell_N^{-1}s\}\subset\{|\CX(Y)|\leq 2\ell_N\}
\]
set $\delta:=\frac{1}{2}N^{-1/2}\ell_N$, $c_*=c_*(F_V):=\frac{1}{2}\ell_*(F_V)$. Then for any $\delta\in(0,c_*N^{-3/2}]$ 
\begin{equation}\label{eq:Nu.Ns.delta}
\{\nu_N(s)\geq \delta^{-1}\}\subset \{|\CX(Y)|\leq 4N^{1/2}\delta\}
\end{equation}
\textbf{Step 3. Conclusion}   
Now we apply Lemma \ref{lem:prob.densities}
\[
\prob\{|\CX(Y)|\leq r\}\leq \frac{1}{4} \overline{\rho}^2r^2 N,
\]
and obtain with $r=4N^{1/2}\delta$
\[
\prob\{ |\CX(Y)|\leq 4N^{1/2}\delta\}\leq 4\overline{p}^2N^2\delta^2
\] 
Now the main assertion follows from \eqref{eq:Nu.Ns} and \eqref{eq:Nu.Ns.delta}  for $\delta\in(0,c_*N^{-3/2})$
\begin{equation}\label{eq:Nu.Ns}
\prob\{\nu_N(s)\geq \delta^{-1} s\}\leq 4\overline{p}^2N^2\delta^2
\end{equation}
\end{proof}

\bibliographystyle{plain}

\begin{bibdiv}
\begin{biblist}

\bib{AW09}{article}{
    author={M. Aizenmann},
		author={S. Warzel},
		title={Localization bounds for multi-particle systems},
		journal={Commun. Math. Phys.},
		volume={290},
		pages={903--934},
		date={2009}
}

\bib{AGK09}{article}{
    author={M. Aizenmann},
		author={F. Germinet},
		author={A. Klein},
		author={S. Warzel},
		title={On Bernoulli decomposition for random variables, concentration  bounds and spectral localization},
		journal={Probab. Theory Related fields},
		volume={143},
		 pages={219--238},
		date={2009}
}
\bib{BK05}{article}{
     author={J. Bourgain},
		 author={C. Kenig},
		  title={On localization in the  continuous Anderson model in higher dimension },
			journal={Invent. Math. },
			volume={161},
			pages={389--426},
			date={2005}
	}
	
\bib{Chu10a}{misc}{
   author={V. Chulaevsky},
	 title={A remark on charges transfer process in multi-particle systems},
	status={ArXiv:math-ph:1005.3387},
   date={2010}
}
\bib{Chu10b}{article}{
   author={V. Chulaevsky},
	  title={on resonances in disordered multi-particle systems},
		 journal={C. R. Acad. Sci Paris, Ser I}
		volume={350},
		pages={81--85},
		date={2011}
}

\bib{CS09}{article}{
   author={V. Chulaevsky},
	 author={Y. Suhov},
	title={Multi-particle Anderson localization. Induction on the number of particles},
	journal={Math. Phys. Anal. Geom.},
	volume={12},
	pages={117--139},
	date= {2009}
}

\bib{Eka11}{article}{
    author={T. Ekanga},
		title={On two-particle Anderson localization at low energies},
		journal={C. R. Acad. Sci. Paris Ser I},
		volume={349},
		pages={167--170},
		date={2011}
}
\bib{Eka19}{article}{
     author={T. Ekanga},
		 title={N-body localization for the Anderson model with strongly mixing correlated random potentials},
    journal={Rep. Math. Phys.},
		volume={83},
		pages={293--303},
		date={2019}
}

\bib{Eka20}{article}{
     author={T. Ekanga},
		 title={Localization in the multi-particle Anderson tight-binding model at low energy},
		 journal={Rev. Math. Phys.},
		volume={32},
		 number={03},
		 date={2020}
}
\bib{GK13}{article}{
     author={F. Germinet},
		  author={A. Klein},
			title={A comprehensive proof of localization  for the continuous Anderson model with singular random potentials}
			journal={J. Eur. Math. Soc (JEMS)},
			 volume={15},
			pages={55143},
			date={2013}
	}
	
\bib{Weg81}{article}{
    author={F. Wegner},
		title={Bounds on the density of states in disordered systems},
		journal={Z. Phys. B. Condensed matter},
		 volume={44},
		pages={9--15},
		date={1981}
}

\end{biblist}
\end{bibdiv}
\end{document}